\newtheorem{theorem}{Theorem}
\newtheorem{proof}{Proof}
\begin{document}

\title{Secret Key Generation for Intelligent Reflecting Surface Assisted Wireless Communication Networks}


\author {Zijie~Ji,
        Phee~Lep~Yeoh,~\IEEEmembership{Member,~IEEE,}
        Deyou~Zhang,
        Gaojie~Chen,~\IEEEmembership{Senior Member,~IEEE,}
        Yan~Zhang,~\IEEEmembership{Member,~IEEE,}
        Zunwen~He,~\IEEEmembership{Member,~IEEE,}
        Hao~Yin,
        and~Yonghui Li,~\IEEEmembership{Fellow,~IEEE}
\thanks{This work was supported by the National Key R\&D Program of China under Grant 2020YFB1804901, the National Natural Science Foundation of China under Grant 61871035, the China Scholarship Council scholarship, and Ericsson company. \emph{(Corresponding author: Yan Zhang.)}
Copyright (c) 2015 IEEE. Personal use of this material is permitted. However, permission to use this material for any other purposes must be obtained from the IEEE by sending a request to pubs-permissions@ieee.org.}
\thanks{Z.~Ji, Y.~Zhang, and Z.~He are with the School of Information and Electronics, Beijing Institute of Technology, Beijing 100081, China (e-mail: \{jizijie, zhangy, hezunwen\}@bit.edu.cn).}
\thanks{P.~L.~Yeoh, D.~Zhang, and Y.~Li are with the School of Electrical and Information Engineering, University of Sydney, Sydney, NSW 2006, Australia (e-mail: \{phee.yeoh, deyou.zhang, yonghui.li\}@sydney.edu.au).}
\thanks{G.~Chen is with the Department of Engineering, University of Leicester, Leicester LE1 7RH, U.K. (e-mail: gaojie.chen@leicester.ac.uk).}
\thanks{H.~Yin is with Institute of China Electronic System Engineering Corporation, Beijing 100141, China (e-mail: yinhao@cashq.ac.cn).}
}

\markboth{IEEE TRANSACTIONS ON VEHICULAR TECHNOLOGY}
{Ji \MakeLowercase{\textit{et al.}}: Secret Key Generation for Intelligent Reflecting Surface Assisted Wireless Communication Networks}
\maketitle

\begin{abstract}
We propose and analyze secret key generation using intelligent reflecting surface (IRS) assisted wireless communication networks. To this end, we first formulate the minimum achievable secret key capacity for an IRS acting as a passive beamformer in the presence of multiple eavesdroppers. Next, we develop an optimization framework for the IRS reflecting coefficients based on the secret key capacity lower bound. To derive a tractable and efficient solution, we design and analyze a semidefinite relaxation (SDR) and successive convex approximation (SCA) based algorithm for the proposed optimization. Simulation results show that employing our IRS-based algorithm can significantly improve the secret key generation capacity for a wide-range of wireless channel parameters.
\end{abstract}

\begin{IEEEkeywords}
Intelligent reflecting surface, physical layer security, secret key generation, semidefinite relaxation, successive convex approximation.
\end{IEEEkeywords}
\IEEEpeerreviewmaketitle
\section{Introduction}
\IEEEPARstart{I}{ntelligent} reflecting surface (IRS) is a promising emerging communication architecture for future wireless networks, which enables smart reconfiguring of the signal propagation environment by using passive reflecting elements with controllable amplitudes and/or phase shifts \cite{reference1}, \cite{reference1.1}. By applying large-scale passive beamforming signal processing, IRS has been shown to effectively improve the data transmission performance by combating deleterious wireless channel conditions such as co-channel interference and dead-zones \cite{reference2}. Recently, IRS has also been considered for improving the physical layer security in wireless communication networks \cite{reference3, reference4,reference4.1}. Most of these works focused on investigating the information-theoretic secrecy transmission, where beamforming and artificial noise vectors are designed to maximize the difference in signal-to-noise ratio (SNR) between legitimate channels and eavesdropping channels, and have not considered the use of IRS for wireless secret key generation.

Secret key generation is a lightweight physical layer security approach which allows legitimate users to establish shared keys based on the correlation between their reciprocal channels. In such scenarios, it is challenging for the eavesdropper to acquire information about the generated keys since there is typically low correlation between the eavesdropper channels and the legitimate channels. In \cite{reference5}, the high directionality and sparsity of millimetre-wave wireless channels were exploited to prevent attacks from co-located eavesdroppers by employing large-scale active beamforming at the transmitter. In \cite{reference6}, the authors considered secret key generation with untrusted relays, where the secret key capacity was optimized independently for the source-to-relay and relay-to-destination links. Recently in \cite{reference7}, we showed that a high scattering multipath channel environment can significantly improve the security of generated keys.

In this paper, inspired by the aforementioned works, we consider the use of IRS as a new degree of freedom (DoF) for wireless secret key generation where each element is an individual scatterer to boost the secret key capacity. Unlike secrecy transmission, the aim of secret key generation is to increase the correlation between legitimate uplink and downlink channels while reducing their correlation with eavesdropping channels. Due to the use of passive beamforming at the IRS, a major challenge for secret key generation is the need to jointly optimize both of the IRS links to the legitimate users. Different from previous works, we assume that the legitimate users are low-cost single antenna devices and the main security advantage is from the large-scale IRS. Our main contributions are as follows:
\begin{itemize}
  \item We derive a new closed-form expression for a lower bound on the secret key capacity of IRS assisted wireless networks with multiple non-colluding eavesdroppers. Our analytical expression accurately characterizes the impact of the channel correlations between the legitimate users, eavesdroppers and IRS.
  \item We develop an optimization framework for the IRS coefficients based on our analytical lower bound expression that maximizes the minimum secret key capacity for the worst-case eavesdropper channel.
  \item To overcome the non-convexity of the objective function, we derive a low-complexity algorithm by applying semidefinite relaxation (SDR) and successive convex approximation (SCA) for the IRS coefficient matrix.
\end{itemize}
Simulations validate that the proposed algorithm can significantly improve the secret key capacity, and we highlight the importance of carefully optimizing the IRS coefficients based on the line-of-sight (LoS) channel conditions and eavesdropper locations.

\section{System Model}
Fig. \ref{figure1} shows an IRS assisted wireless communication network, where a single-antenna wireless access point (Alice) and a single-antenna mobile user (Bob) want to generate shared secret keys based on their reciprocal wireless channels, while $K$ single antenna non-colluding eavesdroppers (Eves) attempt to access the secret keys generated by Alice and Bob based on their own channel observations. The secret key generation between Alice and Bob is assisted by Rose, who is an $N$-element IRS that can modify her reflecting coefficients to minimize the secret key leakage to Eves.

\begin{figure}[t]
  \centering
  \includegraphics[width=0.4\textwidth]{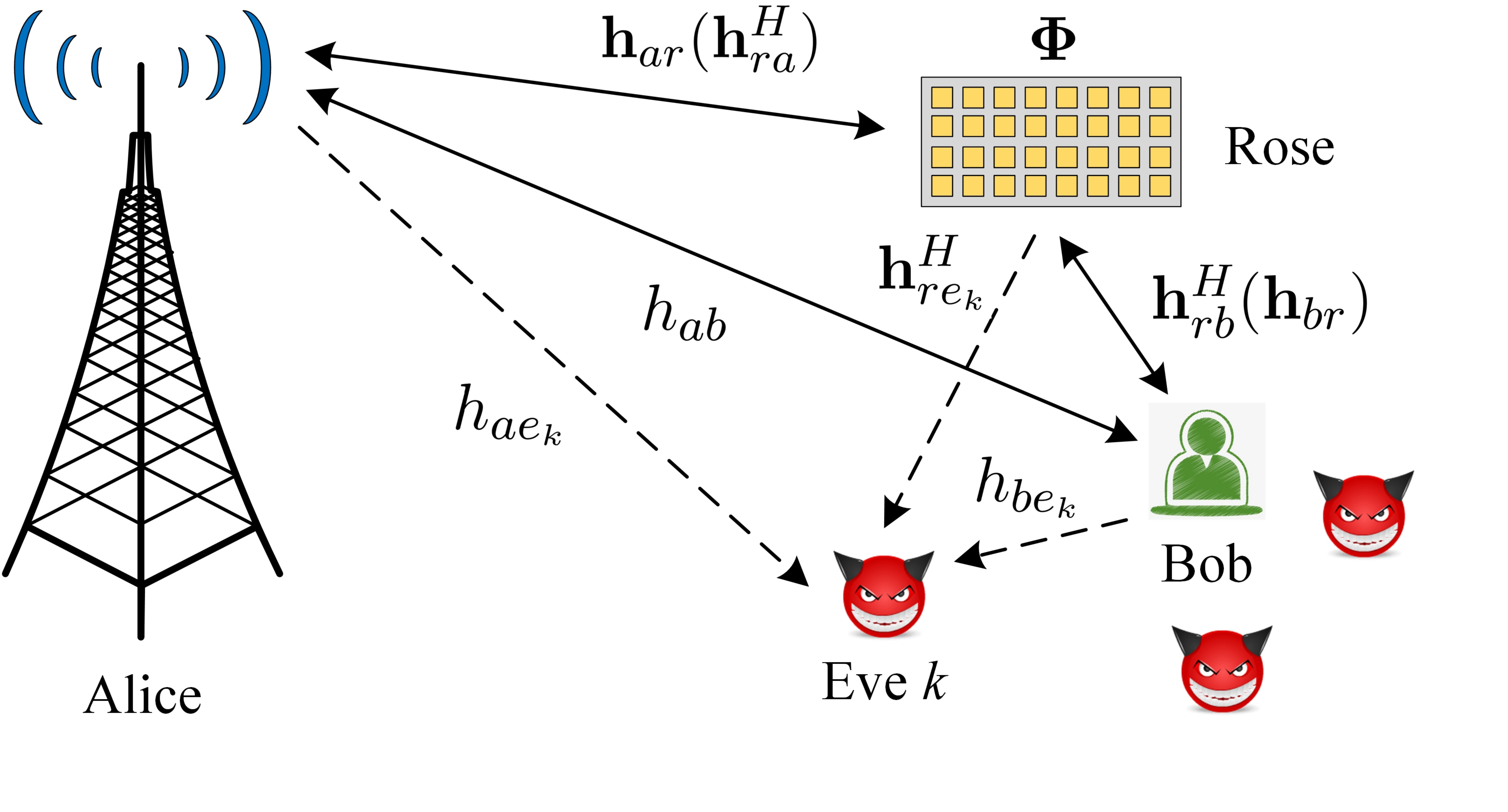}\\
  \caption{System model for an IRS assisted wireless communication network.}
  \label{figure1}
\end{figure}

To generate their shared secret keys, we assume that Alice and Bob will alternatively exchange pilots and perform channel estimations in time-division duplex (TDD) mode. In the odd time slots, Alice transmits pilot signal ${\bf{s}}_1$, and the received signal at Bob or the $k$th Eve is given by
\begin{equation}\label{equation1}
{{\bf{y}}_{i,1}} = ({{\tilde h}_{ai}} + {\bf{{\tilde h}}}_{ri}^H{\bf{\Phi }}{{\bf{{\tilde h}}}_{ar}}){{\bf{s}}_1} + {{\bf{z}}_i},\ i \in \{ b,{e_k}\},
\end{equation}
where ${{\tilde h}_{ai}}\in {\mathbb{C}^{1 \times 1}}$, ${{\bf{{\tilde h}}}_{ar}}\in {\mathbb{C}^{N \times 1}}$, and ${\bf{{\tilde h}}}_{ri}\in {\mathbb{C}^{N \times 1}}$ are the direct channel from Alice to node $i$, the incident channel from Alice to Rose, and the reflected channel from Rose to Bob or Eve $k$, respectively.
For the IRS channel, similar to \cite{reference1} and \cite{reference2}, we assume that ${\bf{\Phi }} = {\rm{diag}}({\alpha _1}{e^{j{\theta _1}}},{\alpha _2}{e^{j{\theta _2}}}, \ldots ,{\alpha _N}{e^{j{\theta _N}}})$ denotes the diagonal amplitude-phase shifting reflecting coefficient matrix of Rose, where ${\alpha _n} \!\in\! [0,1]$ and ${{\theta _n}} \!\in\! [0,2\pi )$ are the amplitude and phase shifts on the incident signal by its $n$th element, $n=1, \ldots ,N$. We denote ${\bf{z}_i} \sim \mathcal{CN}(0,\sigma _i^2\bf{I})$ as the independent and identically distributed (i.i.d.) complex additive white Gaussian noise vector, where $\bf I$ is the identity matrix. Similarly, once ${{\bf{y}}_{b,1}}$ is received, Bob sends pilot signal ${\bf{s}}_2$ in the even time slots, thus Alice or Eve $k$ receives
\begin{equation}\label{equation2}
{{\bf{y}}_{i,2}} = ({{\tilde h}_{bi}} + {\bf{{\tilde h}}}_{ri}^H{\bf{\Phi }}{{\bf{{\tilde h}}}_{br}}){{\bf{s}}_2} + {{\bf{z}}_i},\ i \in \{ a,{e_k}\},
\end{equation}
where ${{\tilde h}_{bi}}\in {\mathbb{C}^{1 \times 1}}$ and ${{\bf{{\tilde h}}}_{br}}\in {\mathbb{C}^{N \times 1}}$ denote the channel from Bob to node $i$ and Rose, respectively. We assume that the pilot symbols ${\bf{s}}_1$ and ${\bf{s}}_2$ have zero mean and unit variance, and are known by all nodes. We also assume that the sampling interval is sufficiently small such that the channel reciprocity holds between the bidirectional transmissions \cite{reference8}. Therefore, without loss of generality, the combined channels can be estimated via channel estimation algorithms, e.g., least square (LS), at each node (Alice, Bob, and Eve $k$) as
\begin{equation}\label{equation3}
\begin{array}{*{20}{l}}
{{{\tilde h}_B} = {\bf{y}}_{i,1}^H{{\bf{s}}_1}/||{{\bf{s}}_1}||_2^2 = ({{\tilde h}_{ab}} + {\bf{\tilde h}}_{rb}^H{\bf{\Phi }}{{{\bf{\tilde h}}}_{ar}}) + {{\hat z}_b},}\\
{{{\tilde h}_A} = {\bf{y}}_{i,2}^H{{\bf{s}}_2}/||{{\bf{s}}_2}||_2^2 = ({{\tilde h}_{ba}} + {\bf{\tilde h}}_{ra}^H{\bf{\Phi }}{{{\bf{\tilde h}}}_{br}}) + {{\hat z}_a},}\\
{{{\tilde h}_{E1_k}} = {\bf{y}}_{i,1}^H{{\bf{s}}_1}/||{{\bf{s}}_1}||_2^2 = ({{\tilde h}_{a{e_k}}} + {\bf{\tilde h}}_{r{e_k}}^H{\bf{\Phi }}{{{\bf{\tilde h}}}_{ar}}) + {{\hat z}_{{e_k}}},}\\
{{{\tilde h}_{E2_k}} = {\bf{y}}_{i,2}^H{{\bf{s}}_2}/||{{\bf{s}}_2}||_2^2 = ({{\tilde h}_{b{e_k}}} + {\bf{\tilde h}}_{r{e_k}}^H{\bf{\Phi }}{{{\bf{\tilde h}}}_{br}}) + {{\hat z}_{{e_k}}},}
\end{array}
\end{equation}
where $\lVert\cdot\rVert_2$ is the Euclidean norm, and ${\hat z_i} \sim \mathcal{CN}(0,\sigma _i^2),\ i \in \{ a,b,{e_k}\}$ denotes the estimation error. $\tilde h_{E1_k}$ and $\tilde h_{E2_k}$ are the estimated channels in the odd and even time slots, respectively. After $L$ rounds of pilot exchanges within a single coherence time $T_c$, the combined channel vectors are denoted as ${{\bf{\tilde h}}_D}\!=\![{\tilde h_D}(1), {\tilde h_D}(2), \ldots ,{\tilde h_D}(L)]$, where $D \!\in\! \{ A,B,E1_k,E2_k\}$. The $L$ samples are collected to obtain reliable channel statistics such as covariance and correlation for subsequent optimization processing. Furthermore, all channel vectors are normalized as ${{\bf{h}}_D} = {{\bf{\tilde h}}_D}/{\lVert {{{\bf{\tilde h}}_D}} \rVert_2}$ before quantization to eliminate the impact of amplitude difference.
\newcounter{TempEqCnt}
\setcounter{TempEqCnt}{\value{equation}} 
\setcounter{equation}{10} 
\begin{figure*}[hb] 
	\hrulefill  
	\begin{equation}\label{equation11}
	{C_{{\rm{lb1}},k}} = \frac{1}{{{T_c}}}{\log _2}\left(1 + \frac{{K_{ab}^2\sigma _{{e_k}}^4 - {K_{ab}}({K_{a{e_k}}} + {K_{b{e_k}}})\sigma _b^2\sigma _{{e_k}}^2}}{{(({K_{a{e_k}}} + {K_{b{e_k}}})\sigma _{{e_k}}^2 + \sigma _{{e_k}}^4)({K_{ab}}(\sigma _a^2 + \sigma _b^2) + \sigma _a^2\sigma _b^2)}}\right).
	\end{equation}
\end{figure*}
\setcounter{equation}{\value{TempEqCnt}}

\section{IRS Secret Key Capacity Lower Bound}
In this section, we derive a new closed-form expression for a lower bound on the secret key capacity of IRS assisted wireless networks. Here, as obtaining the exact secret key capacity is still an open problem, we consider the lower bound capacity as the minimal achievable secret key rate which is more conservative compared with the upper bound. For general wireless channels, the lower bound on the secret key capacity can be expressed as \cite{reference6}
\begin{equation}\label{equation4}
\begin{aligned}
&C({h_A};{h_B}\!\parallel\! {h_{E1_k}},{h_{E2_k}}) \ge
\max \{ I({h_A};{h_B})\\
&- I({h_A};{h_{E1_k}},{h_{E2_k}}),
I({h_A};{h_B}) \!-\! I({h_B};{h_{E1_k}},{h_{E2_k}})\},
\end{aligned}
\end{equation}
where $I(X;Y)$ is the mutual information of variables $X$ and $Y$, and max \{$\cdot$\} is the maximum function. To derive the minimum
achievable secret key capacity in IRS assisted wireless communication networks, we assume that the statistics of the channel state information (CSI) of all channels are known at Alice and Rose. This is a common assumption in large-scale wireless networks where the eavesdroppers are not completely passive nodes but other users who are authorized but untrusted \cite{reference9}. They coexist in the same network being curious about the information exchanged between Alice and Bob, and their CSI can be obtained during their communications with the access point Alice. We note that in non-IRS systems, e.g, \cite{reference6, reference10}, only the direct channels ${{h}_{ab}}$, ${{h}_{ae_k}}$, and ${{h}_{be_k}}$ are considered. Since they are commonly modeled as Rayleigh channels, one of the eavesdropping channels $h_{E1_k}$ and $h_{E2_k}$ can be treated as independent to the other channels and neglected in the secret key capacity characterization due to the spatial decorrelation.

Considering that the IRS provides additional reflected channel terms to $h_{E1_k}$ and $h_{E2_k}$, which contain partially common information as the legitimate channels ${{h}_A}$ and ${{h}_B}$ (the reflecting coefficient matrix ${\bf{\Phi }}$ and the incident channel ${{\bf{h}}_{ar}}$ or ${{\bf{h}}_{br}}$) and dominate these channels, both $h_{E1_k}$ and $h_{E2_k}$ should be included in our analysis. To formulate the optimization problem and establish a relationship between the optimized variables and optimized performance, we first derive a new closed-form expression of the IRS secret key capacity lower bound as the worst case security scenario, and then design an efficient optimization algorithm to maximize this bound.

\begin{theorem}
For Eve $k$, the minimum achievable secret key capacity in IRS assisted wireless networks is
\begin{equation}\label{equation5}
{C_{{\rm{lb}},k}} = \frac{1}{{{T_c}}}{\log _2}\left(1 + \frac{{K_{ab}^2 - {K_{ab}}({K_{ae_k}} + {K_{be_k}})}}{{({K_{ae_k}} + {K_{be_k}} + {\sigma ^2})(2{K_{ab}}{\rm{ + }}{\sigma ^2})}}\right),
\end{equation}
where ${K_{pq}} = \mathbb{E}\{ ({h_{pq}} + {\bf{h}}_{rq}^H{\bf{\Phi }}{{\bf{h}}_{pr}}){({h_{pq}} + {\bf{h}}_{rq}^H{\bf{\Phi }}{{\bf{h}}_{pr}})^*}\}$ is the correlation function with $p \in \{a,b\}$, $q \in \{b,e_k\}$, and $\mathbb{E}\{\cdot\}$ is the expectation with respect to (w.r.t.) the $L$ samples in the normalized channel vectors.
\end{theorem}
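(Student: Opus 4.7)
My plan is to invoke the Gaussian closed-form for mutual information on the lower bound in \eqref{equation4}. Because the Rayleigh fading assumption plus Gaussian noise makes the estimated channels in \eqref{equation3} jointly complex Gaussian, each mutual information term becomes a ratio of determinants of covariance matrices, and the right-hand side of \eqref{equation5} should emerge after combining two such log-ratios via $\log a - \log b = \log(a/b)$, followed by division by $T_c$ to convert the per-sample rate into the per-coherence-time key rate.

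The first step is to express all the relevant second-order statistics in terms of the correlation functions $K_{pq}$ defined in the theorem. Since $K_{pq}$ is the expected power of the combined direct-plus-reflected channel for the ordered pair $(p,q)$, we have $\mathrm{Var}(h_A) = K_{ba} + \sigma_a^2$ and $\mathrm{Var}(h_B) = K_{ab} + \sigma_b^2$; by bidirectional reciprocity the underlying signal in $h_A$ and $h_B$ is common, yielding $\mathrm{Cov}(h_A,h_B) = K_{ab}$. The same argument gives $\mathrm{Var}(h_{E1_k}) = K_{ae_k} + \sigma^2$ and $\mathrm{Var}(h_{E2_k}) = K_{be_k} + \sigma^2$. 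After normalizing all noise variances to a common $\sigma^2$, applying $I(h_A;h_B) = \log_2\bigl[(K_{ab}+\sigma^2)^2/((K_{ab}+\sigma^2)^2 - K_{ab}^2)\bigr]$ handles the legitimate term in \eqref{equation4}.

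The second step is to compute $I(h_A; h_{E1_k}, h_{E2_k})$ (and the analogous $h_B$ version) by forming the full $3\times 3$ joint covariance of $(h_A, h_{E1_k}, h_{E2_k})$ and invoking the Gaussian formula $I = \log_2 \det(\Sigma_A)\det(\Sigma_{E1,E2})/\det(\Sigma_{A,E1,E2})$. The subtle cross-covariances come from the fact that $h_{E1_k}$ and $h_B$ share the Alice--Rose vector $\mathbf{h}_{ar}$, while $h_{E2_k}$ and $h_A$ share the Bob--Rose vector $\mathbf{h}_{br}$, so terms such as $\mathbb{E}[\mathbf{h}_{rb}^H\bm{\Phi}\mathbf{h}_{ar}(\mathbf{h}_{re_k}^H\bm{\Phi}\mathbf{h}_{ar})^*]$ must be evaluated by iterated expectation, using the independence of the remaining Rayleigh factors so that the surviving contributions reduce to expressions in the $K_{pq}$ quantities (any zero-mean factor will collapse the corresponding cross-term to zero, leaving only pieces that can be identified with $K_{ae_k}$ or $K_{be_k}$).

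Substituting the two closed-form mutual information terms into \eqref{equation4}, using $\log_2 A - \log_2 B = \log_2(A/B)$, and dividing by $T_c$ should collapse the bound to the single $\log_2(1 + \cdot)$ form of \eqref{equation5}. The hardest step I anticipate is the algebraic simplification of the $3\times 3$ determinant: verifying that the cross-covariance contributions fold exactly into the numerator $K_{ab}^2 - K_{ab}(K_{ae_k}+K_{be_k})$ and denominator $(K_{ae_k}+K_{be_k}+\sigma^2)(2K_{ab}+\sigma^2)$ will require careful bookkeeping, and in particular the two expressions inside the $\max$ in \eqref{equation4} must be shown to coincide (or one dominated by the other) so that a single unambiguous bound results. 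The regime of validity $K_{ab}\geq K_{ae_k}+K_{be_k}$ under which the bound is non-trivially positive should fall out as a sanity check.
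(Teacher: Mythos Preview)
Your proposal is correct and follows essentially the same route as the paper: express the lower bound \eqref{equation4} via Gaussian log-determinant formulas, populate the $3\times 3$ covariance matrix of $(h_A,h_{E1_k},h_{E2_k})$ with the $K_{pq}$ quantities, expand the determinants, and observe that the two candidates inside the $\max$ differ only by swapping $\sigma_a^2\leftrightarrow\sigma_b^2$ so they coincide under the equal-noise assumption. The only cosmetic difference is that the paper writes $C_{\mathrm{lb1},k}$ directly as the conditional-entropy difference $H(h_A\mid h_{E1_k},h_{E2_k})-H(h_A\mid h_B)$, which yields the single determinant ratio $\det(\mathbf{W}_{AE1_kE2_k})K_{BB}/[\det(\mathbf{W}_{E1_kE2_k})\det(\mathbf{W}_{AB})]$ in one step rather than subtracting two separate mutual informations; this slightly streamlines the algebra you anticipate in your final simplification step.
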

\begin{proof}
In (\ref{equation4}), the lower bound is expressed as the maximum of ${C_{{\rm{lb1}},k}} \!=\! I({h_A};{h_B}) \!-\! I({h_A};{h_{E1_k}},{h_{E2_k}})$ and ${C_{{\rm{lb2}},k}} \!=\! I({h_A};{h_B}) \!-\! I({h_B};{h_{E1_k}},{h_{E2_k}})$, so we proceed to derive closed-form expressions for each term and compare them to obtain the final expression.

According to \cite{reference10}, the first term ${C_{{\rm{lb1}},k}}$ can be expressed as
\begin{equation}\label{equation6}
\begin{aligned}
{C_{{\rm{lb1}},k}}
&=I({h_A};{h_B}) - I({h_A};{h_{E1_k}},{h_{E2_k}})\\
&=H({h_A}{\rm{|}}{h_{E1_k}},\!{h_{E2_k}}) \!-\! H({h_A}{\rm{|}}{h_B})\\
&=\frac{1}{{{T_c}}}{\log _2}\frac{{\det ({{\bf{W}}_{AE1_kE2_k}}){K_{BB}}}}{{\det ({{\bf{W}}_{E1_kE2_k}})\det ({{\bf{W}}_{AB}})}},
\end{aligned}
\end{equation}
where $\det(\cdot)$ is the matrix determinant. In (\ref{equation6}), we adopt the approximation that $\mathbb{E} \{\log_2 (x)\} \approx \log_2 (\mathbb{E}\{x\})$ when $x$ is a variable with a small variance, which is valid in the high SNR region. The matrix term in the denominator of (\ref{equation6}) is given as
\begin{equation}\label{equation7}
\begin{aligned}
{{\bf{W}}_{AE1_kE2_k}} &\!=\! {\mathbb{E}\left\{{\left(\! {\begin{array}{*{20}{c}}
{{h_A}}\\
{{h_{E1_k}}}\\
{{h_{E2_k}}}
\end{array}}\!\right)({h_A^*})({h_{E1_k}^*})({h_{E2_k}^*})} \right\}}\\
& \!=\! \!{\left[ {\begin{array}{*{20}{l}}
{{K_{AA}}}&{{K_{AE1_k}}}&{{K_{AE2_k}}}\\
{{K_{AE1_k}}}&{{K_{E1_kE1_k}}}&{{K_{E1_kE2_k}}}\\
{{K_{AE2_k}}}&{{K_{E1_kE2_k}}}&{{K_{E2_kE2_k}}}
\end{array}} \right]},
\end{aligned}
\end{equation}
where ${K_{PQ}} = \mathbb{E}\{ {h_P}h_Q^*\}, P, Q \in \{ A,B,E1_k,E2_k\}$. Furthermore, we can obtain that ${K_{AA}} = {K_{ab}} + \sigma _a^2$, ${K_{BB}} = {K_{ab}} + \sigma _b^2$, ${K_{E1_kE1_k}} = {K_{ae_k}} + \sigma _{e_k}^2$, and ${K_{E2_kE2_k}} = {K_{be_k}} + \sigma _{e_k}^2$, where ${K_{pq}}$ is the correlation function. As such, we can derive a closed-form expression of the determinant for (\ref{equation7}) as
\begin{equation}\label{equation8}
\begin{aligned}
&\det({{{\bf{W}}_{AE1_kE2_k}}})\\
\!=&\: {K_{AA}}{K_{E1_kE1_k}}{K_{E2_kE2_k}}
\!+\!2{K_{AE1_k}}{K_{AE2_k}}{K_{E1_kE2_k}}\\
&\!-\! K_{AE2_k}^2{K_{E1_kE1_k}}
\!-\! K_{E1_kE2_k}^2{K_{AA}} \!-\! K_{AE1_k}^2{K_{E2_kE2_k}}\\
\!=&\: ({K_{ab}} \!+\! \sigma _a^2)({K_{a{e_k}}} \!+\! \sigma _{{e_k}}^2)({K_{b{e_k}}} \!+\! \sigma _{{e_k}}^2) \!+\! 2{K_{ab}}{K_{a{e_k}}}{K_{b{e_k}}}\\ &\!-\! {K_{ab}}{K_{b{e_k}}}({K_{a{e_k}}} \!+\! \sigma _{{e_k}}^2)
\!-\! {K_{a{e_k}}}{K_{b{e_k}}}({K_{ab}} \!+\! \sigma _a^2)\\ &\!-\! {K_{ab}}{K_{a{e_k}}}({K_{b{e_k}}} \!+\! \sigma _{{e_k}}^2)\\
\!=&\: ({K_{ae_k}} \!+\! {K_{be_k}})\sigma _a^2\sigma _{e_k}^2 \!+\! {K_{ab}}\sigma _{e_k}^4 \!+\! \sigma _a^2\sigma _{e_k}^4.
\end{aligned}
\end{equation}
Similarly, the determinants of the other matrices in (\ref{equation6}) can be calculated as
\begin{equation}\label{equation9}
\begin{aligned}
\det ({{\bf{W}}_{E{1_k}E{2_k}}})
&\!=\! {\rm{det}}\left( {\mathbb{E}\left\{ {\left( {\begin{array}{*{20}{c}}
{{h_{E{1_k}}}}\\
{{h_{E{1_k}}}}
\end{array}} \right)(h_{E{1_k}}^*)(h_{E{2_k}}^*)} \right\}} \right)\\
&\!=\! ({K_{a{e_k}}} + {K_{b{e_k}}})\sigma _{{e_k}}^2 + \sigma _{{e_k}}^4,
\end{aligned}
\end{equation}
\begin{equation}\label{equation10}
\begin{aligned}
\det({{{\bf{W}}_{AB}}})
&\!=\! {\rm{det}}\left( {\mathbb{E}\left\{ {\left( {\begin{array}{*{20}{c}}
{{h_A}}\\
{{h_B}}
\end{array}} \right)(h_A^*)(h_B^*)} \right\}} \right)\\
&\!=\! {K_{ab}}(\sigma _a^2 + \sigma _b^2) + \sigma _a^2\sigma _b^2.
\end{aligned}
\end{equation}
Substituting (\ref{equation8})--(\ref{equation10}) into (\ref{equation6}), we have ${C_{{\rm{lb1}},k}}$ as shown at the bottom of this page.

Likewise, we can derive a closed-form expression for the second term ${C_{{\rm{lb2}},k}}$, which is the same as (\ref{equation11}) except $\sigma _b^2$ in the numerator is replaced with $\sigma _a^2$. Therefore, ${C_{{\rm{lb}},k}}={C_{{\rm{lb1}},k}}$ if $\sigma _b^2 \le \sigma _a^2$, otherwise ${C_{{\rm{lb}},k}}={C_{{\rm{lb2}},k}}$. When we consider all noises are equal, i.e., $\sigma _a^2 = \sigma _b^2 = \sigma _e^2 = {\sigma ^2}$, both of them can be simplified as (\ref{equation5}), which completes the proof.
\end{proof}

\section{Proposed SCA-SDR Optimization}
In this section, we develop a low-complexity optimization framework to determine the $N$-element reflecting coefficient matrix ${\bf{\Phi}}$ for Rose that maximizes the minimum secret key capacity amongst all the $K$ non-colluding eavesdroppers. We note that the optimization is commonly performed at the access point Alice and the derived optimal solution should be sent to Rose within the coherence time. If Rose is equipped with high-cost hardware and provided with sufficient computing resources, this processing can also be undertaken by the IRS, which leads to faster adjustments.

Firstly, the optimization problem is formulated as
\setcounter{equation}{11}
\begin{subequations}\label{equation12}
\begin{align}
\mathop {\max_{\bm{\varphi}}}{\kern 1pt}{\kern 1pt}{\min_{k \in \mathcal{K}}}{\kern 1pt} {\kern 1pt} {\kern 1pt} {\kern 1pt} &{C_{{\rm{lb}},k}} \\
{\rm{s.}}{\rm{t.}}{\kern 1pt} {\kern 1pt}{\kern 1pt}{\kern 1pt} &{\left| {{\varphi _n}} \right|^2} \le 1, n = 1, \ldots ,N,
\end{align}
\end{subequations}
where $\mathcal{K}=\{1, \ldots ,K\}$, $\left| \cdot \right|$ is the modulus operator, and ${\varphi _n} = {\alpha _n}{e^{j{\theta _n}}}$ is the $n$th diagonal element of ${\bf{\Phi}}$. The constraints in (\ref{equation12}b) applies for ${\alpha _n} \in [0,1]$.

In our derived closed-form expression for the secret key capacity in (\ref{equation5}), we note that $T_c$ is a constant for a given channel and $\log _2(\cdot)$ is a monotonically increasing function. As such, the optimization problem (\ref{equation12}) can be rewritten as
\begin{subequations}\label{equation13}
\begin{align}
\mathop {\max_{\bm{\varphi}}}{\kern 1pt}{\kern 1pt}{\min_{k \in \mathcal{K}}}{\kern 1pt} {\kern 1pt} {\kern 1pt} {\kern 1pt}  &\left(1 + \frac{{K_{ab}^2 - {K_{ab}}({K_{ae_k}} + {K_{be_k}})}}{{({K_{ae_k}} + {K_{be_k}} + {\sigma ^2})(2{K_{ab}}{\rm{ + }}{\sigma ^2})}}\right)\\
{\rm{s.}}{\rm{t.}}{\kern 1pt} {\kern 1pt}{\kern 1pt}{\kern 1pt} &{\left| {\varphi_n} \right|^2} \le 1, n = 1, \ldots ,N.
\end{align}
\end{subequations}
To establish a direct optimization relationship with the IRS coefficient matrix ${\bf{\Phi}}$, we need to calculate $K_{ab}$, $K_{ae_k}$, and $K_{be_k}$. To do so, we define ${{\bf{v}}^H} = [{v_1},{v_2}, \ldots ,{v_N}]$, where ${v_n} = \varphi _n^*$, and ${{\bf{\Sigma }}_{pr}} = {\rm{diag(}}{\bf{h}}_{pr}^H{\rm{)}}$, then by rearranging the positions of the variables as ${\bf{h}}_{pr}^H{\bf{\Phi }}{{\bf{h}}_{rq}}{\bf{h}}_{rq}^H{{\bf{\Phi }}^H}{\bf{h}}_{pr} = {{\bf{v}}^H}{{\bf{\Sigma }}_{pr}}{{\bf{h}}_{rq}}{\bf{h}}_{rq}^H{\bf{\Sigma }}_{pr}^H{\bf{v}}$, we can derive the expressions of $K_{ab}$, $K_{ae_k}$, and $K_{be_k}$ as
\begin{equation}\label{equation14}
\begin{array}{l}
{K_{ab}} = \sigma _{{h_{ab}}}^2 + {{\bf{v}}^H}{{\bf{R}}_{arb}}{\bf{v}},\\
{K_{a{e_k}}} = \sigma _{{h_{a{e_k}}}}^2 + {{\bf{v}}^H}{{\bf{R}}_{ar{e_k}}}{\bf{v}},\\
{K_{b{e_k}}} = \sigma _{{h_{b{e_k}}}}^2 + {{\bf{v}}^H}{{\bf{R}}_{br{e_k}}}{\bf{v}},
\end{array}
\end{equation}
where $\sigma _{{h_{pq}}}^2 = \mathbb{E}\{ {h_{pq}}h_{pq}^*\}$ and ${{\bf{R}}_{prq}} = \mathbb{E}\{ {{\bf{\Sigma }}_{pr}}{{\bf{h}}_{rq}}{\bf{h}}_{rq}^H{\bf{\Sigma }}_{pr}^H\}$, which is a positive semi-definite covariance matrix of combined channel vectors. Based on (\ref{equation14}), we can further expand the numerator and denominator of (\ref{equation13}a) as
\begin{equation}\label{equation15}
\begin{aligned}
h_k({\bf{v}}) =&\: {a_1} + {b_1}{{\bf{v}}^H}{{\bf{R}}_{arb}}{\bf{v}} + {c_1}{{\bf{v}}^H}{{\bf{R}}_{\Sigma}}{\bf{v}}\\
&+ {{\bf{v}}^H}{{\bf{R}}_{arb}}{\bf{v}}{{\bf{v}}^H}{{\bf{R}}_{arb}}{\bf{v}} +{{\bf{v}}^H}{{\bf{R}}_{arb}}{\bf{v}}{{\bf{v}}^H}{{\bf{R}}_{\Sigma}}{\bf{v}},\\
\end{aligned}
\end{equation}
\begin{equation}\label{equation16}
g_k({\bf{v}}) \!=\! {a_2} \!+\! {b_2}{{\bf{v}}^H}{{\bf{R}}_{arb}}{\bf{v}} \!+\! {c_2}{{\bf{v}}^H}{{\bf{R}}_{\Sigma}}{\bf{v}} \!+\! 2{{\bf{v}}^H}{{\bf{R}}_{arb}}{\bf{v}}{{\bf{v}}^H}{{\bf{R}}_{\Sigma}}{\bf{v}},
\end{equation}
where we set ${{\bf{R}}_\Sigma } \!=\! {{\bf{R}}_{ar{e_k}}} \!+\! {{\bf{R}}_{br{e_k}}}$, and $\sigma _\Sigma ^2 \!=\! \sigma _{{h_{a{e_k}}}}^2 \!+\! \sigma _{{h_{b{e_k}}}}^2$, and the scalar terms in (\ref{equation15}) and (\ref{equation16}) are ${a_1} \!=\! \sigma _{{h_{ab}}}^4 \!+\! \sigma _{{h_{ab}}}^2\sigma _\Sigma ^2 \!+\! 2{\sigma ^2}\sigma _{{h_{ab}}}^2 \!+\! {\sigma ^2}\sigma _\Sigma ^2 \!+\! {\sigma ^4}$, ${b_1} \!=\! 2\sigma _{{h_{ab}}}^2 \!+\! \sigma _\Sigma ^2 \!+\! 2{\sigma ^2}$, ${c_1} \!=\! \sigma _{{h_{ab}}}^2 \!+\! {\sigma ^2}$, ${a_2} \!=\! 2\sigma _{{h_{ab}}}^2\sigma _\Sigma ^2 \!+\! {\sigma ^2}\sigma _\Sigma ^2 \!+\! 2{\sigma ^2}\sigma _{{h_{ab}}}^2 \!+\! {\sigma ^4}$, ${b_2} \!=\! 2\sigma _\Sigma ^2 \!+\! 2{\sigma ^2}$, and ${c_2} \!=\! 2\sigma _{{h_{ab}}}^2 \!+\! {\sigma ^2}$.

We note that the objective function in (\ref{equation13}a) is non-convex due to the max-min operations. In the following, we introduce an auxiliary variable $C$ to transform the optimization based on (\ref{equation15}) and (\ref{equation16}) as
\begin{subequations}\label{equation17}
\begin{align}
\mathop {\max_{{\bf{v}},C}}{\kern 1pt}{\kern 1pt}{\kern 1pt} {\kern 1pt}  &C\\
{\rm{s.}}{\rm{t.}}{\kern 1pt} {\kern 1pt}{\kern 1pt}{\kern 1pt} &C \in {\mathbb{R}},\\
&{\left| {v_n} \right|^2} \le 1, n = 1, \ldots ,N,\\
&{h_k}({\bf{v}}) \ge {C}{g_k}({\bf{v}}), k = 1, \ldots ,K.
\end{align}
\end{subequations}

Due to the non-convexity of the matrix product terms in (\ref{equation15}) and (\ref{equation16}), the constraints in (\ref{equation17}d) are still non-convex w.r.t. the optimization variable ${\bf{v}}$, where denoting ${\bf{V}} = {\bf{v}}{{\bf{v}}^H}$ confirms that ${\bf{V}} \succeq 0$ and ${\rm{rank}}({\bf{V}}) = 1$. Since the rank-1 constraint is non-convex, we proceed to apply the SDR technique to relax this constraint. By substituting ${{\bf{v}}^H}{\bf{Av}}{{\bf{v}}^H}{\bf{Bv}} = {\rm{Tr}}({\bf{AVBV}})$ and ${{\bf{v}}^H}{\bf{Cv}} = {\rm{Tr}}({\bf{CV}})$, where $\bf{A}$, $\bf{B}$, and $\bf{C}$ are any positive semi-definite matrices and ${\rm{Tr}}(\cdot)$ is the trace of a matrix, we can now consider that $h_k({\bf{V}})$ and $g_k({\bf{V}})$ are both convex w.r.t. ${\bf{V}}$. Finally, the constraints in (\ref{equation17}d) are all in forms of the Difference of Convex (DC) functions for a given $C$, which can be globally optimized by standard techniques such as branch-and-bound and cutting planes algorithms. To further reduce the complexity of the optimization, we propose an efficient suboptimal solution based on the SCA technique where we apply the first order Taylor series expansion at ${\bf{V}}^{(m)}$ to obtain a linear approximation, which results in
\begin{equation}\label{equation18}
{h_k}({\bf{V}}) \!=\! {h_k}({\bf{V}}^{(m)}) \!+\! {\rm{Tr}}({\rm{Re}}\{ \nabla {h_k}{({\bf{V}}^{(m)})^H}({\bf{V}} \!-\! {\bf{V}}^{(m)})\}),
\end{equation}
where the gradient of ${h_k}({\bf{V}})$ at ${\bf{V}}^{(m)}$ is given as
\begin{equation}\label{equation19}
\begin{aligned}
\nabla {h_k}({{\bf{V}}^{(m)}}) =&\: {b_1}{\bf{R}}_{arb}^T \!+\! {c_1}{\bf{R}}_\Sigma ^T \!+\! 2{({{\bf{R}}_{arb}}{{\bf{V}}^{(m)}}{{\bf{R}}_{arb}})^T}\\
&+ {({{\bf{R}}_{arb}}{{\bf{V}}^{(m)}}{{\bf{R}}_\Sigma } + {{\bf{R}}_\Sigma }{{\bf{V}}^{(m)}}{{\bf{R}}_{arb}})^T}.
\end{aligned}
\end{equation}
Moreover, we introduce a base vector ${{\bf{e}}_n}$, whose $n$th element is 1 and others 0, to express the constraints in (\ref{equation17}c) with ${\bf{V}}$, which can be shown as ${\left| {{v_n}} \right|^2} = {\left| {{{\bf{v}}^H}{{\bf{e}}_n}} \right|^2} = {{\bf{v}}^H}{{\bf{e}}_n}{\bf{e}}_n^H{\bf{v}} = {\rm{Tr}}({{\bf{E}}_n}{\bf{V}}) \le 1$, where ${{\bf{E}}_n}={{\bf{e}}_n}{\bf{e}}_n^H$. As a result, the optimization problem can be re-expressed as
\begin{subequations}\label{equation20}
\begin{align}
\mathop {\max_{{\bf{V}},C}}{\kern 1pt}{\kern 1pt}{\kern 1pt} {\kern 1pt}  &C\\
{\rm{s.}}{\rm{t.}}{\kern 1pt} {\kern 1pt}{\kern 1pt}{\kern 1pt} &C \in {\mathbb{R}}, {\bf{V}} \succeq 0,\\
&{\rm{Tr}}({{\bf{E}}_n}{\bf{V}}) \le 1, n = 1, \ldots ,N,\\
&{C}{g_k}({\bf{V}})- {h_k}({\bf{V}}) \le 0, k = 1, \ldots ,K.
\end{align}
\end{subequations}

Therefore, we have successfully convexified the IRS optimization problem in (\ref{equation20}) which can be solved by alternatively optimizing $\bf{V}$ and $C$. Algorithm 1 details the proposed SDR-SCA based algorithm, where $\epsilon$ denotes a small convergence threshold, $M$ is the maximum number of iterations, and $C_{\max}$ is a sufficiently large number. To extract a rank one solution from the optimal matrix $\bf V$, the well-known approach of Gaussian randomization \cite{reference11} is employed. Note that the proposed SDR-SCA based algorithm can solve the problem with a worst case complexity of $\mathcal{O}(\max\{N,K\}^4{N^{{1 \mathord{\left/ {\vphantom {1 2}} \right. \kern-\nulldelimiterspace} 2}}}{\bar M}\log_2({C_{\max} \mathord{\left/ {\vphantom {C_{\max} \epsilon }} \right. \kern-\nulldelimiterspace} \epsilon }))$, where ${\bar M}$ is the average number of iterations. This is a polynomial complexity algorithm, which is more efficient than standard branch-and-bound algorithm with exponential complexity of $\mathcal{O}({({1 \mathord{\left/ {\vphantom {1 {{\delta _{\rm{A}}}}}} \right. \kern-\nulldelimiterspace} {{\delta _{\rm{A}}}}}{\delta _{\rm{P}}})^{KN}})$, where ${{\delta _{\rm{A}}}}$ and ${{\delta _{\rm{P}}}}$ are the discrete quantization intervals of amplitude and phase.

\begin{algorithm}[tb]
\caption{Proposed SDR-SCA based Iterative Optimization}
\label{algorithm1}
\begin{algorithmic}[1]
\renewcommand{\algorithmicrequire}{\textbf{Input:}}
\Require
${\bf{R}}_{arb}$, ${\bf{R}}_\Sigma$, $\sigma _{{h_{ab}}}^2$, $\sigma _\Sigma ^2$, $\sigma^2$, $K$, $\epsilon$, $M$, and $C_{\max}$.
\renewcommand{\algorithmicensure}{\textbf{Output:}}
\Ensure
$\bf{v}$.
\State Initialize $C_{\min}=0$, and set $C^{(t)}={{({C_{\max }} + {C_{\min }})} \mathord{\left/ {\vphantom {{({C_{\max }} + {C_{\min }})} 2}} \right. \kern-\nulldelimiterspace} 2}$.
\Repeat \;{(Bisection search for $C$)}
\Repeat \;{(SDR-SCA algorithm for $\bf{V}$)}
\State For given $C^{(t)}$, when $m=1$, initialize a positive semi-definite reflecting coefficient matrix ${\bf{V}}^{(1)}$ randomly; when $m>1$, given ${\bf{V}}^{(m-1)}$, find the optimal optimization variable ${\bf{V}}^{(m)}$ according to the problem (\ref{equation19}).
\State Update $m=m+1$.
\Until the optimization variable ${\bf{V}}$ reaches convergence or $m=M$.
\If {the aforementioned problem is solvable}
\State record ${\bf{V}}_{\rm{opt}}={\bf{V}}^{(m)}$ of the current iteration, then update $C_{\min}=C^{(t)}$;
\Else \;{$C^{(t)}$ is unreachable, then update $C_{\max}=C^{(t)}$}.
\EndIf
\Until the difference $({C_{\max }}-{C_{\min }})$ is below $\epsilon$.
\State Recover $\bf{v}$ from ${\bf{V}}_{\rm{opt}}$ by Gaussian randomization \cite{reference11}.
\end{algorithmic}
\end{algorithm}

\section{Simulation Results}
This section presents simulation results to highlight the performance advantage of IRS for secret key generation in wireless networks. We consider the network scenario in Fig.~\ref{figure2}, where Alice, Bob, and the central point of Rose are located at (5, 0, 20), (3, 100, 0), and (0, 100, 2), respectively. For eavesdroppers, we consider two cases: (1) $K$ Eves are randomly distributed in a circle of radius $R$ centered on Bob, (2) $K$ Eves are randomly distributed in a circle of radius $R$ centered on Alice. The simulation parameters are shown in Table~\ref{table1}, if not specifically mentioned.
\begin{figure}[t]
  \centering
  \includegraphics[width=0.4\textwidth]{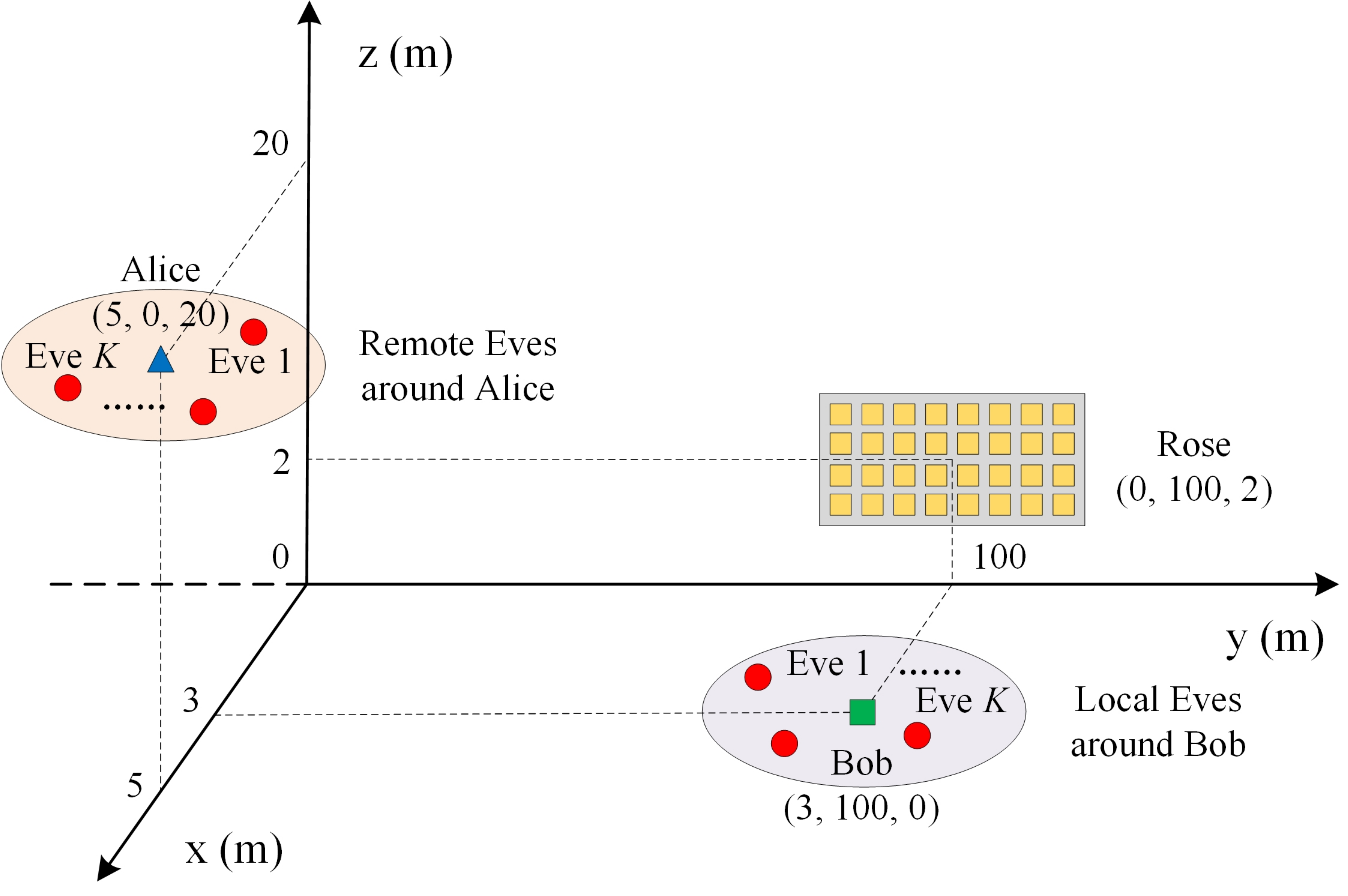}\\
  \caption{Simulation setup: (1) Eves around Bob and (2) Eves around Alice.}
  \label{figure2}
\end{figure}

\begin{table}[t]
\renewcommand{\arraystretch}{1.1}
\caption{Simulation Parameters}
\label{table1}
\centering
\begin{tabular}{|l|l|}
\hline
\bf{Parameter}&\bf{Value}\\
\hline
Carrier frequency&1 GHz.\\
\hline
Path loss at 1 m&${\zeta _0}$ = --30 dB.\\
\hline
\multirow{2}*{IRS configuration}&Uniform rectangular array (URA) with 5 rows\\& and $N/5$ columns, $\lambda$ spacing.\\
\hline
\multirow{2}*{Path loss exponent}&a) ${c_{ab}}$ \!=\! ${c_{ae_k}}$=\! 5, ${c_{ar}}$ \!=\! 3.5, ${c_{be_k}}$=\! ${c_{rb}}$ \!=\! ${c_{re_k}}$=\! 2;\\
&b) ${c_{ab}}$ \!=\! ${c_{be_k}}$=\! 5, ${c_{ar}}$ \!=\! ${c_{re_k}}$=\! 3.5, ${c_{be_k}}$= ${c_{rb}}$ \!=\! 2.\\
\hline
\multirow{2}*{Rician factor}&a) $\kappa _{ab} \!=\! \kappa _{ae_k}$=\! 3, ${\kappa_{ar}}$ \!=\! ${\kappa_{rb}}$ \!=\! ${\kappa_{re_k}}$=\! 2, ${\kappa_{be_k}}$=\! 5;\\
&b) $\kappa _{ab} \!=\! \kappa _{be_k}$=\! 3, ${\kappa_{ar}}$ \!=\! ${\kappa_{rb}}$ \!=\! ${\kappa_{re_k}}$=\! 2, ${\kappa_{ae_k}}$=\! 5.\\
\hline
Other parameters&$\sigma^2$ = --105 dB, ${T_c}$ = 1 s, $\epsilon$ = 0.01, $L$ = $M$ = 50.\\
\hline
\end{tabular}
\end{table}

The direct channel between Alice and Bob ${\tilde h_{ab}}$ is generated by ${\tilde h_{ab}} = \sqrt {{\zeta _0}d_{ab}^{ - {c_{ab}}}} {g_{ab}}$, where $d_{ab}$ and ${c_{ab}}$ are the distance from Alice to Bob, and the path loss exponent, respectively. The small-scale fading component ${g_{ab}}$ is assumed to be Rician fading defined as ${g_{ab}} = \sqrt {{{{\kappa _{ab}}} \mathord{\left/  {\vphantom {{{\kappa _{ab}}} {(1 + {\kappa _{ab}})}}} \right. \kern-\nulldelimiterspace} {(1 + {\kappa _{ab}})}}} g_{ab}^{{\rm{LoS}}} + \sqrt {{1 \mathord{\left/ {\vphantom {1 {(1 + {\kappa _{ab}})}}} \right. \kern-\nulldelimiterspace} {(1 + {\kappa _{ab}})}}} g_{ab}^{{\rm{NLoS}}}$, where ${\kappa _{ab}}$ is the Rician factor, while $g_{ab}^{{\rm{LoS}}}$ and $g_{ab}^{{\rm{NLoS}}}$ are the deterministic LoS and Rayleigh non-LoS (NLoS) components. The same channel model is employed for all other channels ${{\tilde h}_{a{e_k}}}, {{\tilde h}_{b{e_k}}}, {{{\bf{\tilde h}}}_{ar}}, {{{\bf{\tilde h}}}_{br}}$, and ${{{\bf{\tilde h}}}_{r{e_k}}}$. Since Eves are assumed to be located around either Alice or Bob, the Pearson correlation coefficient $\rho  = [{J_0}({{2\pi d} \mathord{\left/ {\vphantom {{2\pi d} \lambda }} \right.
\kern-\nulldelimiterspace} \lambda })]^2$ is considered between the NLoS components at locations separated by distance
$d$, where ${J_0}(\cdot)$ is the Bessel function of the first kind, and $\lambda$ is the wavelength. We note that for secret key generation, the effect of large-scale fading is removed by using sample normalization to ensure a large uncorrelation of channels among spatially distributed nodes and thus the randomness of generated keys. In simulations, we consider two benchmarks of without IRS and IRS with random shifting for the reflecting coefficients.

Fig. \ref{figure3} shows the secret key capacity versus the number of IRS elements, $N$, for case (1), where Eves are located around Bob. We see that increasing the size of the IRS results in a significant improvement in the secret key capacity of our SDR-SCA scheme. Furthermore, we observe that SDR-SCA outperforms the two benchmark schemes for the entire range of $N$. The figure also shows that the IRS secret key capacity increases with decreasing $\kappa = {\kappa _{rb}} \!=\! {\kappa _{r{e_k}}}$ which corresponds to a weaker LoS path. This is because a lower $\kappa$ provides more randomness in the wireless channels leading to lower correlations between the legitimate and eavesdropper channels. If the channel statistics are unavailable, the IRS with random shifting only results in a small performance improvement compared to the case without IRS.
\begin{figure}[t]
  \centering
  \includegraphics[width=0.38\textwidth]{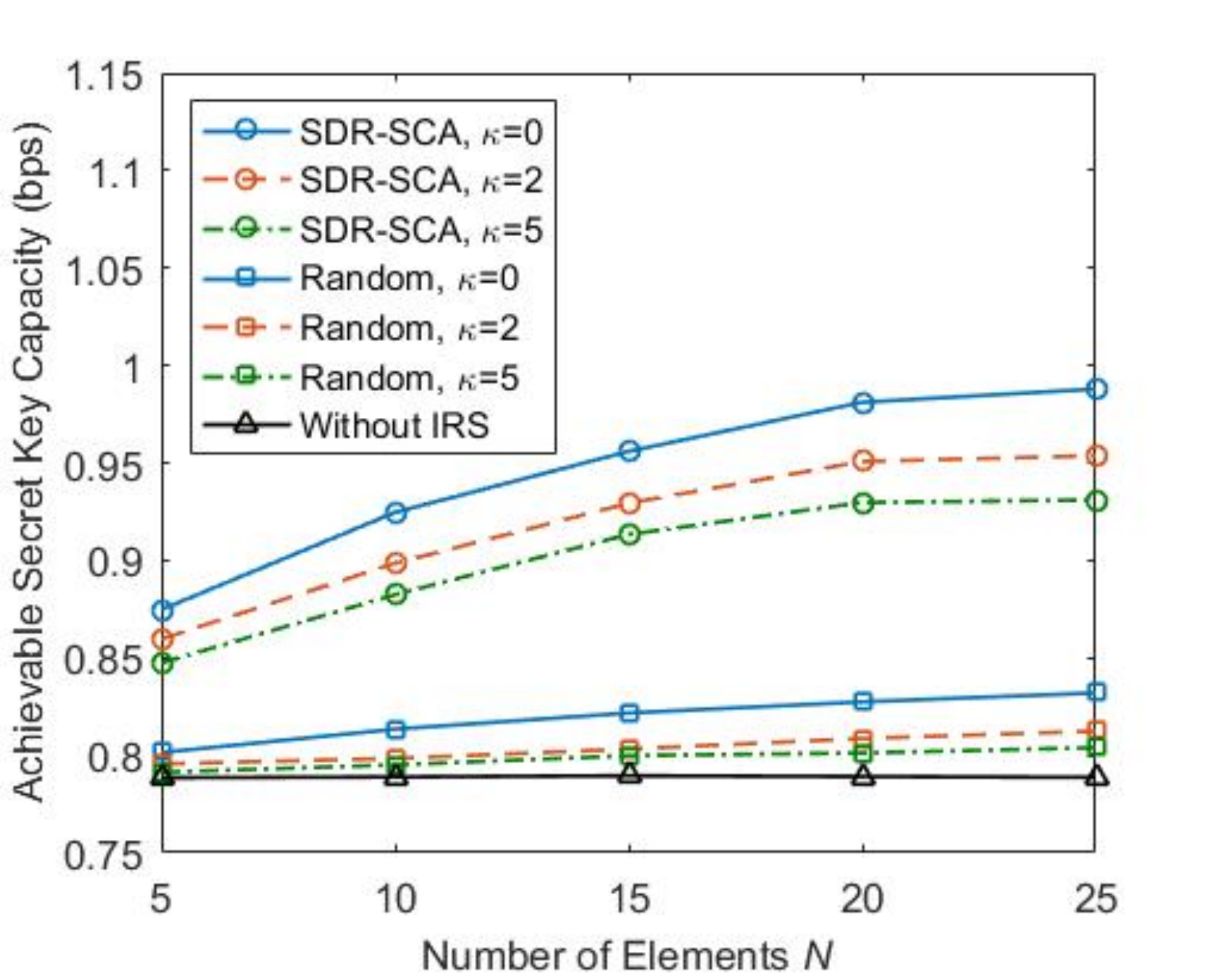}\\
  \caption{Achievable secret key capacity versus the number of IRS reflecting elements $N$ in case (1), under different Rician factors with $K$ = 2, $R$ = $\lambda$.}
  \label{figure3}
\end{figure}

In Fig. \ref{figure4}, we see that the achievable secret key capacity decreases with increasing number of Eves. We also find that the proposed SDR-SCA scheme significantly outperforms the two benchmark schemes by at least 0.1 bps. Compared with Fig. 3, we observe that the increasing number of Eves only has a small negative impact on the performance. As expected, the secret key capacity improves when there is a greater average distance between the eavesdroppers with $R$ = $\lambda$.
\begin{figure}[t]
  \centering
  \includegraphics[width=0.38\textwidth]{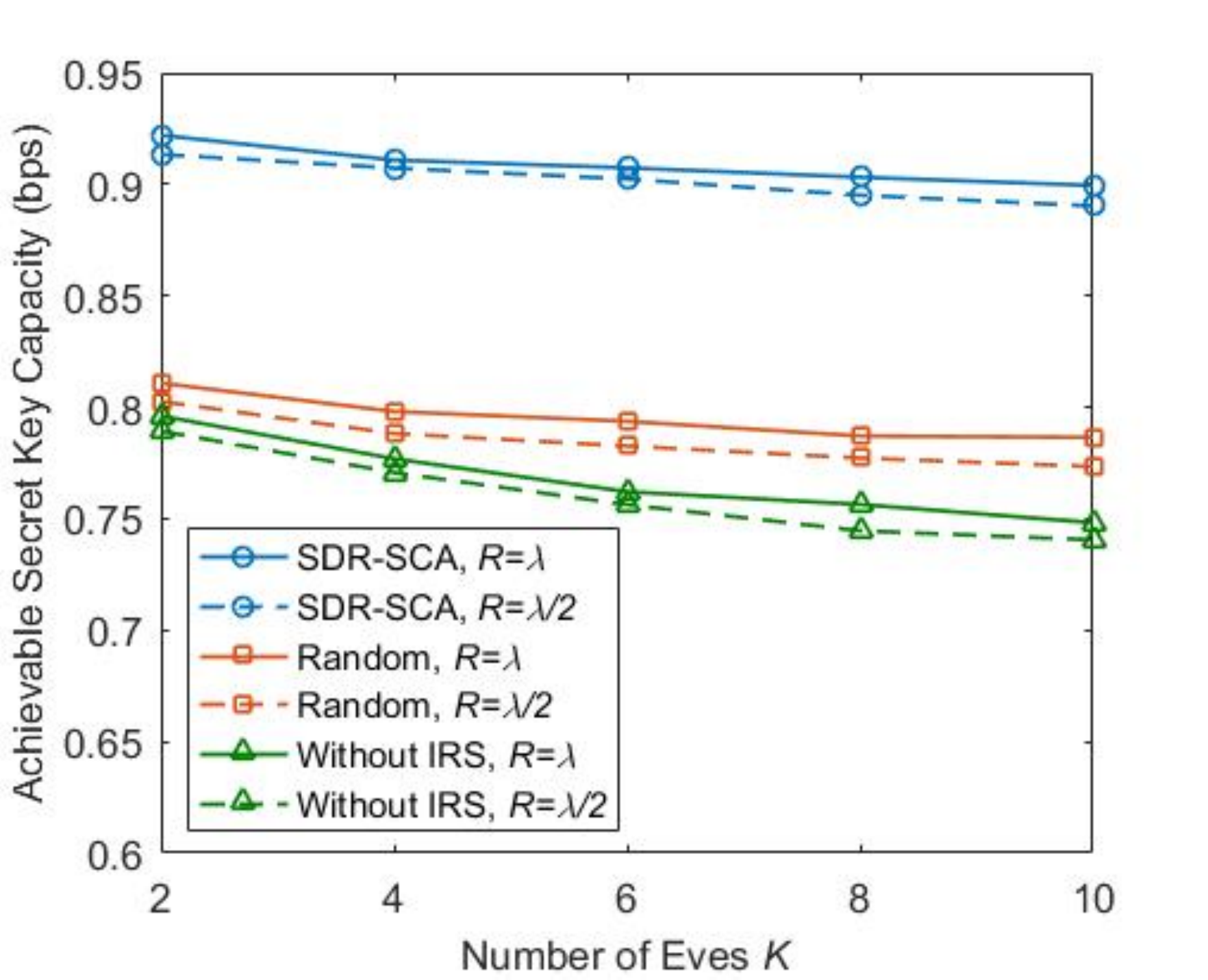}\\
  \caption{Achievable secret key capacity versus the number of Eves $K$ in case (1), under different distribution radiuses of Eves with $N$ = 20.}
  \label{figure4}
\end{figure}

In Fig. \ref{figure5}, we compare the secret key capacity of cases (1) and (2). The figure shows that the location of the eavesdroppers has a significant impact on the IRS secret key capacity whilst there is negligible impact on the non-IRS scheme. Specifically, we highlight that IRS can achieve a higher secret key capacity in case (1) when Eves are located around Bob. This is because in case (1), the IRS is also located closer to Eves and thus the IRS reflecting components will have a higher contribution to the eavesdropping channels compared to case (2) where the IRS is further away from the Eves. When the IRS is closer to Eves, the correlations ${K_{ae_k}}$ and ${K_{be_k}}$ in our derived capacity expression in (5) can be significantly reduced by our proposed IRS optimization, which leads to a great improvement in the secret key capacity.

\begin{figure}[t]
  \centering
  \includegraphics[width=0.38\textwidth]{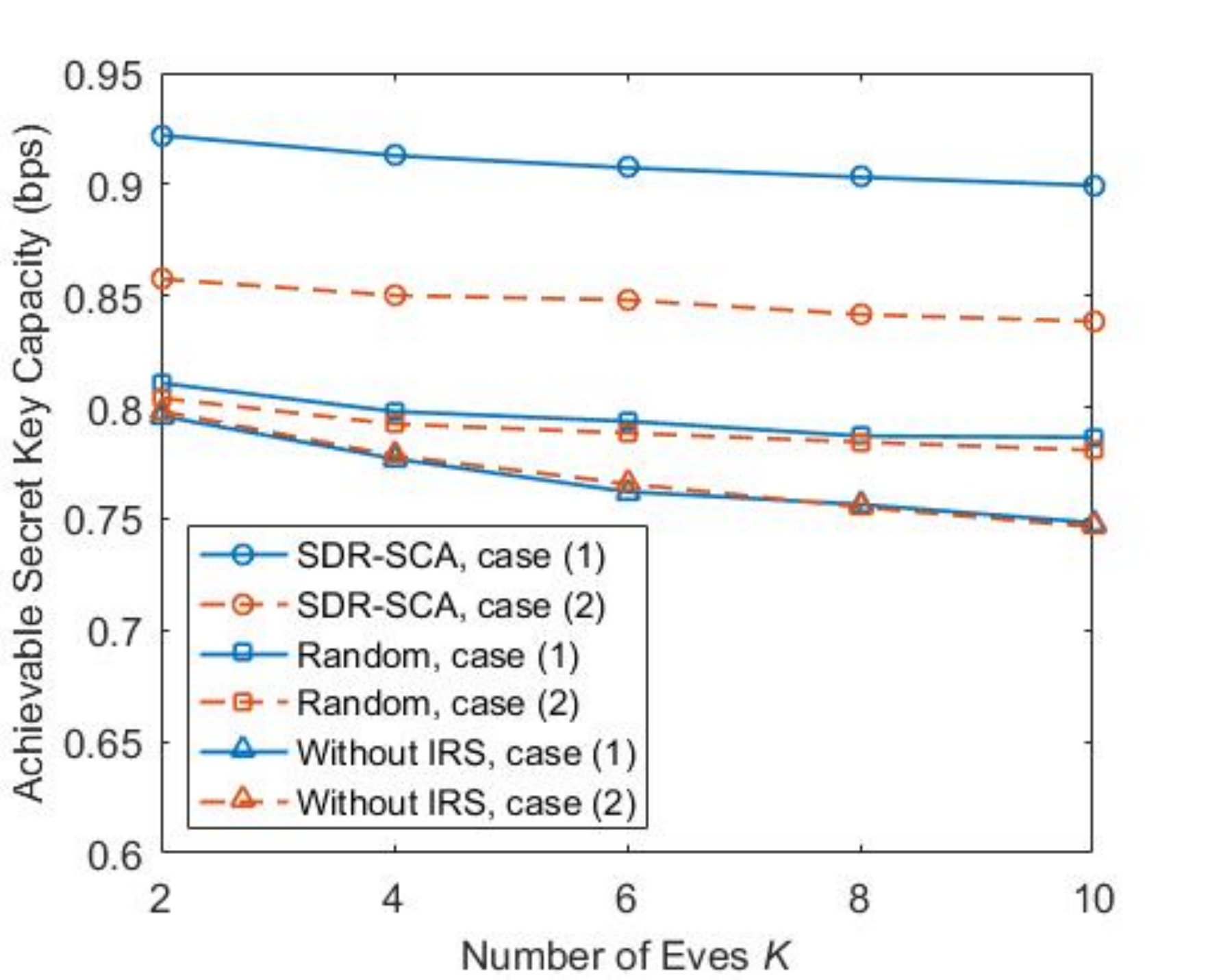}\\
  \caption{Achievable secret key capacity versus the number of Eves $K$ in both cases (1) and (2) with $N$ = 20, $R$ = $\lambda$.}
  \label{figure5}
\end{figure}

\section{Conclusions}
We derived a new lower bound on the secret key capacity of IRS assisted wireless networks with multiple non-colluding eavesdroppers. Based on this bound, an efficient SDR-SCA optimization algorithm was proposed to design the IRS reflecting coefficient matrix that maximizes the minimum achievable secret key capacity for the worst-case eavesdropper. Simulations showed that the achievable secret key capacity can be efficiently improved with our proposed IRS optimization algorithm for different eavesdropper locations and LoS channels conditions.


\begin{thebibliography}{11}
\bibitem{reference1}
C. Pan \emph{et al}., ``Multicell MIMO communications relying on intelligent reflecting surfaces,'' \emph{IEEE Trans. Wireless Commun.}, vol.~19, no.~8, pp.~5218--5233, Aug.~2020.
\bibitem{reference1.1}
C. Pan \emph{et al}., ``Intelligent reflecting surface aided MIMO broadcasting for simultaneous wireless information and power transfer,'' \emph{IEEE J. Sel. Areas Commun.}, vol.~38, no.~8, pp.~1719--1734, Aug.~2020.
\bibitem{reference2}
Q. Wu and R. Zhang, ``Towards smart and reconfigurable environment:
Intelligent reflecting surface aided wireless network,'' \emph{IEEE Commun. Mag.}, vol.~58, no.~1, pp.~106--112, Jan.~2020.
\bibitem{reference3}
J. Chen, Y. Liang, Y. Pei, and H. Guo, ''Intelligent reflecting surface: A programmable wireless environment for physical layer security," \emph{IEEE Access}, vol.~7, pp.~82599--82612, Jun.~2019.
\bibitem{reference4}
X. Guan, Q. Wu, and R. Zhang, ``Intelligent reflecting surface assisted secrecy communication: Is artificial noise helpful or not?,'' \emph{IEEE Wireless Commun. Lett.}, vol.~9, no.~6, pp.~778--782, Jun.~2020.
\bibitem{reference4.1}
S. Hong, C. Pan, H. Ren, K. Wang, and A. Nallanathan, ``Artificial-noise-aided secure MIMO wireless communications via intelligent reflecting surface,'' \emph{IEEE Trans. Commun.}, DOI: 10.1109/TCOMM.2020.3024621, Sep.~2020.
\bibitem{reference5}
L. Jiao, N. Wang, P. Wang, A. Alipour-Fanid, J. Tang, and K. Zeng, ``Physical layer key generation in 5G wireless networks,'' \emph{IEEE Wireless Commun.}, vol.~26, no.~5, pp.~48--54, Oct.~2019.
\bibitem{reference6}
C. D. T. Thai, J. Lee, and T. Q. S. Quek, ``Physical-layer secret key generation with colluding untrusted relays,'' \emph{IEEE Trans. Wireless Commun.}, vol.~15, no.~2, pp.~1517--1530, Feb.~2016.
\bibitem{reference7}
Z. Ji \emph{et al}., ``Vulnerabilities of physical layer secret key generation against environment reconstruction based attacks,'' \emph{IEEE Wireless Commun. Lett.}, vol.~9, no.~5, pp.~693--697, May~2020.
\bibitem{reference8}
S. T. Ali, V. Sivaraman, and D. Ostry, ``Eliminating reconciliation cost in secret key generation for body-worn health monitoring devices,'' \emph{IEEE Trans. Mobile Comput.}, vol.~13, no.~12, pp.~2763--2776, Dec.~2014.
\bibitem{reference9}
G. Chen, Y. Gong, P. Xiao, and J. A. Chambers, ``Physical layer network security in the full-duplex relay system,'' \emph{IEEE Trans. Inf. Forensics Security}, vol.~10, no.~3, pp.~574--583, Mar.~2015.
\bibitem{reference10}
J. W. Wallace and R. K. Sharma, ``Automatic secret keys from reciprocal MIMO wireless channels: Measurement and analysis,'' \emph{IEEE Trans. Inf. Forensics Security}, vol.~5, no.~3, pp.~381--392, Sep.~2010.
\bibitem{reference11}
Z. Luo, W. Ma, A. M. So, Y. Ye, and S. Zhang, ``Semidefinite relaxation of quadratic optimization problems,'' \emph{IEEE Signal Process. Mag.}, vol.~27, no.~3, pp.~20--34, May~2010.
\end{thebibliography}
\end{document}